\newcommand{\B}[1]{{\color{blue}#1}}
\newcommand{\mc}[1]{\mathcal{#1}}
\newcommand{\R}{\mathbb{R}}
\newcommand{\bs}[1]{\boldsymbol{#1}}
\newcommand{\tup}[1]{\textup{#1}}
\DeclareMathOperator*{\gph}{gph}   
\DeclareMathOperator*{\argmin}{argmin}   
\newacronym{DR}{DR}{Demand Response}
\newacronym{DSO}{DSO}{Distribution System Operator}
\newacronym{TSO}{TSO}{Transmission System Operator}
\newacronym{BLO}{BLO}{BiLevel Optimization Problem}
\newacronym{GNE}{GNE}{Generalized Nash Equilibrium}
\newacronym{MPCC}{MPCC}{Mathematical Program with Complementarity Constraint}
\newacronym{GNEP}{GNEP}{Generalized Nash Equilibrium Problem}
\newacronym{vGNE}{\textit{v}-GNE}{\textit{variational} Generalized Nash Equilibrium}
\newacronym{lSE}{$\ell$-SE}{local Stackelberg equilibrium}
\newacronym{MPEC}{MPEC}{Mathematical Program with Equilibrium Constraints}
\newacronym{MIP}{MIP}{Mixed-Integer Program}
\newacronym{KKT}{KKT}{Karush–Kuhn–Tucker}
\newtheorem{theorem}{Theorem}
\newtheorem{lemma}{Lemma}
\newtheorem{definition}{Definition}
\newtheorem{assumption}{Assumption}
\title{\LARGE \bf
A Stackelberg game for incentive-based\\ demand response in energy markets}
\author{Marta Fochesato, Carlo Cenedese, John Lygeros
\thanks{Authors are with Automatic Control Laboratory, Department of Electrical Engineering and Information Technology,
        ETH Z\"urich, Physikstrasse 3 8092 Z\"urich, Switzerland
        {\tt\small \{mfochesato, ccenedese, jlygeros\}@ethz.ch}.
}
\thanks{        Research is supported by BFE and the ETH Foundation under the ReMaP project and  by NCCR Automation, a National Centre of
Competence in Research, funded by the Swiss National Science
Foundation (grant number 180545).}%
}
\begin{document}

\maketitle
\thispagestyle{empty}
\pagestyle{empty}

\begin{abstract} 
In modern buildings renewable energy generators and storage devices are spreading, and consequently  the role  of the users in the power grid is shifting from passive to active. We  design a demand response scheme that exploits the prosumers' flexibility to provide ancillary services to the main grid. We propose a  hierarchical scheme to coordinate the interactions between the distribution system operator and a community of smart prosumers. The framework inherits characteristics from price-based and incentive-based schemes and it retains the advantages of both. We cast the problem as a Stackelberg game with the prosumers as followers and the distribution system operator as leader. We solve the resulting bilevel optimization program via a KKT reformulation, proving the existence and the convergence to a local Stackelberg equilibrium. 

\end{abstract}

\IEEEpeerreviewmaketitle

\section{Introduction}
The increase in the share of renewable generation and electric storage devices, together with the deployment of smart buildings capable of  purchasing and selling energy to the main grid, has lead to an increasing active role of end users, turning them from consumers to \textit{prosumers}. This shift has been acknowledged also by the latest European directives~\cite{EU} that highlight the need for a holistic scheme that integrates local markets, wholesale markets and the provision of ancillary services. In this context, the concept of \gls{DR} is becoming popular as a way of harnessing prosumers flexibility to provide ancillary services to the main grid. \gls{DR} encompasses changes in the electric usage by end users induced, for example, by changes in the price of electricity over time (\textit{price-based} schemes) or monetary incentives (\textit{incentive-based} schemes) \cite{Overall}. Price-based schemes are extensively studied in literature (see \cite{Price1}, \cite{Price2} and the references therein), but incentive-based  ones are gaining traction thanks to the greater freedom  provided to the end users who may or may not accept the incentive \cite{Pricing_review}. 

As entities involved in the \gls{DR} scheme, we consider  the \gls{TSO}, the \gls{DSO} and a collection of prosumers. This gives rise  to a hierarchical structure that can be cast as a bilevel optimization problem (see \cite{Overall} for a complete survey)  resulting from a Stackelberg game between one (or multiple) leader (the \gls{DSO}) and multiple followers (the prosumers) \cite{Overall2}. Different specialized \gls{DR} schemes have been proposed in the literature, dealing with either the wholesale market for pricing purposes, or with the local one for congestion management. To the best of our knowledge, an integrated market design coupling the two markets as envisioned by the European directives in \cite{EU} is still missing. 

We propose a novel Stackelberg-based \gls{DR} model that unifies a time-of-use formulation for pricing and a emergency \gls{DR} scheme with the goal of  maintaining the stability of the grid by satisfying a given flexibility request, both in terms of (upward) response and (downward) rebound. This is achieved via a multi-objective formulation where the pricing scheme serves a peak shaving objective, while the incentive scheme serves the ancillary service provision objective. In particular, we envision the case where the monetary incentive is governed by a contract between \gls{DSO} and prosumers ensuring fairness by distributing the incentive proportionally to the effort sustained by the prosumers. 

Our contribution can be summarized as follows.
\begin{itemize}
    \item We design an integrated \gls{DR} scheme coupling the pricing map selection problem (wholesale market) with the flexibility provision problem (ancillary service). Unlike \cite{Comp1} and \cite{Comp2}, we design \textit{soft} reward functions capable of describing saturation to avoid infeasibility and provide a broader action space for the prosumers. 
    \item We cast the \gls{DR} scheme as Stackelberg game, which we turn into a \gls{MPEC} by embedding the equivalent KKT conditions of the followers into the leader optimization problem  to compute a \gls{vGNE} of the followers' game.
    \item We prove the existence of  {at least one} \gls{lSE} for the game and that the strategies of the leader and the followers converge to {it}.
\end{itemize}
\mbox{The notation adopted is borrowed from \cite{Filippo}.}

\section{Market design} \label{Sec:DR}

We consider a single \gls{DSO} purchasing flexibility services from a community $\mathcal{N} \coloneqq \{1, \ldots , N\}$ of prosumers. The structure of the problem is inherently  hierarchical \cite{Overall2}. We consider a day-ahead scheduling problem where trading takes place over $T$ intervals of equal length {$\Delta\tau > 0$}, i.e., the scheduling period runs over $\tau \in \{1,\dots, T\}$.  At the  beginning of the scheduling period, the \gls{TSO} sends to the \gls{DSO} a flexibility request signal $r\coloneqq \text{col}((r_\tau)_{\tau \in\{1,\dots ,T\}})\in\R^T$  for the upcoming $T$ time intervals. Additionally, the \gls{TSO} communicates the associated reward function \mbox{$\pi(\cdot|{r}): \mathbb{R}^T \rightarrow \mathbb{R}_{\geq 0}^T$} over the entire scheduling period. The latter defines the monetary incentive received by the \gls{DSO} for the flexibility provided, given the request $r$.

The \gls{DSO}'s goal is assumed to be to  maximize its revenue. In this direction, the role of the \gls{DSO} is twofold. First, it designs the function $h:\R^T\rightarrow \R^T_{\geq 0}$ that defines the price of the energy purchased by the prosumers during each time interval $\tau$. Additionally, the \gls{DSO} redistributes part of the revenues $\pi(\cdot|r)$ to those prosumers actively offering flexibility according to a predefined agreement.   

Given the pricing map and the redistribution share, the prosumers compute the optimal amount of purchased power and flexibility provided in order to minimize their individual economic cost.  Additionally, each follower $i\in\mc N$ has a set of local and coupling constraints; the former rules the energy management of the $i$-th agent, while the latter limits the aggregated power demand to be within the grid capacity. The choice  of each $i\in\mc N$ influences that of the others, thus the arising decision-making process  is a noncooperative constrained game. 

This hierarchical setup  can be efficiently modeled by means of a single-leader (the \gls{DSO}) multi-follower (the prosumers) Stackleberg game, as done also in \cite{Overall2} and \cite{BLOalgo}, resulting in a bilevel problem of the form,
\begin{equation}
\label{eq:stack_game}
\begin{split}
&\argmin_{z_0\in\mc Z,\, \boldsymbol{x} }
J^\tup{{DSO}}(z_0,\bs x| r)\\
&\: {\text{s.t.}} \quad x_i \coloneqq \argmin_{x_i\in\mc X_i(\boldsymbol{x}_{-i})} J^i(x_i,\bs{x}_{-i}|z_0),\: \forall i\in \mc N,
\end{split}
\end{equation}
where $z_0 \in \mathbb{R}^{n_0}$ groups the decision variables controlled by the \gls{DSO} and $x_i \in \mathbb{R}^{n_i}$ those of prosumer $i\in\mc N$, $J^\tup{{DSO}}$ and $J^i$ are the cost of the \gls{DSO} and of each prosumer $i$, and the sets $\mc Z$ and $\mc X_i$ are the feasible sets for the decision variables $z_0$ and $x_i$, respectively. 


\subsection{Reward function $\pi$ and revenue redistribution}
{The scheduling period's intervals} are classified into three mutually exclusive classes (see \cite{Rebound}): intervals with \mbox{$r_\tau = 0$} (no ancillary service required by the \gls{TSO}), \mbox{$r_\tau > 0$} ({\em response blocks}, where the \gls{TSO} requests reduction in energy consumption), and those with \mbox{$r_\tau < 0$} ({\em rebound blocks}, where the \gls{TSO} requests an increase in energy consumption). We assume that any load reduction or increase causes a deviation from the optimal operation  of the connected devices, leading to additional costs. To compensate, if the \gls{DSO} manages to provide the flexibility to the \gls{TSO}, it receives an economic reward that it partly share{s} with the prosumers.


Let us divide the  cases of response and rebound by means of two variables \mbox{$y_i \in  \mathbb{R}_{\geq 0}^T$} and \mbox{$k_i \in \mathbb{R}_{\leq 0}^T$}, denoting the amount of flexibility provided in the two cases respectively by follower $i\in\mc N$. 
During response blocks, the \gls{DSO} receives a monetary incentive \mbox{$\pi^\tup{R}(\cdot|r) : \mathbb{R}^T_{\geq 0} \rightarrow \mathbb{R}^T$} dependent on the aggregate response flexibility $\sum_{i=1}^N{y_i}$ provided. $\pi^\tup{R}(\sum_{i=1}^N{y_i} | {r})$  is defined component wise for each \mbox{$\tau\in\{1,\ldots,T\}$  as}
\begin{equation}\label{eq:reward_DSO_R}
   \begin{cases}
   0, \hspace{1.2cm}\:\quad & \text{if}\: r_\tau \leq 0\\
    \overline{p} \sum_{i=1}^N{y_{i,\tau}}, \quad  &\text{if}\: \sum_{i=1}^N{y_{i,\tau}} \leq {r_\tau}\\
    (\overline{p} -N\beta)\sum_{i=1}^N{y_{i,\tau}} + N\beta r_\tau, \quad  &\text{otherwise}
    \end{cases}
\end{equation}
where $\overline{p}>0$ is the price paid by the \gls{TSO} to the \gls{DSO} for each unit of response provided. The saturation coefficient  \mbox{$\beta\geq \overline p/N$}  decreases the monetary incentive rate after reaching the required amount, effectively discouraging the \gls{DSO} from offering more flexibility to the \gls{TSO}. Notice that when $\beta= \overline p/N$ we retrieve a simple saturation.

To persuade the followers to change their consumption pattern and provide the  flexibility $\sum_{i=1}^N{y_i}$, the \gls{DSO} shares with them part of the total reward $\pi^\tup{R}$. We assume a dynamic revenue share composed of two parts: a coefficient $\alpha\in[0,1]^T$ that modulates the incentive distributed to the followers, and a function 
\mbox{$\phi^\tup{R}_i\left(\cdot |{r}\right):\R^{NT}\rightarrow \R^T$} modeling the agreement between \gls{DSO} and prosumers on how such an incentive is designed.
Therefore, the incentive received by each prosumer $i\in\mc N$ from the \gls{DSO} is given by $\Lambda \phi^\tup{R}_i\left(\left.y_i, \bs{y}_{-i}\right|{r}\right)$, where \mbox{$ \Lambda\coloneqq\text{diag}(\alpha)\in\R^{T\times T}$} and \mbox{$\phi^\tup{R}_i$} is defined component-wise for each $\tau \in \{1,\ldots, T\}$ as
\begin{equation}\label{eq:rew_i}
   \begin{dcases}
    0, \hspace{1.2cm}\qquad \text{if}\: r_\tau \leq 0 \\
    \overline{p} y_{i,\tau}, \qquad\qquad \text{if} \: y_{i,\tau} \leq {r_\tau}- \textstyle \sum_{j \in\mc N\setminus\{i\}} y_{j,\tau}&\\
    (\overline{p} - \beta) y_{i,\tau} + \beta \left({r_\tau} - \textstyle \sum_{j \in\mc N\setminus\{i\}}y_{j,\tau}\right), \: \text{oth.}& 
    \end{dcases}
\end{equation}
Notice that this choice ensures a fair share of the revenue redistribution among prosumers {proportional to} the flexibility provided, and  that satisfies  $ \pi^\tup{R }\left( \left.\sum_{i=1}^N{y_i}\right| {r}\right)= \sum_{i=1}^N \phi^\tup{R}_i\left(\left.y_i, \bs{y}_{-i}\right|{r}\right)$.
During a response block, the net reward collected by the \gls{DSO} for the \gls{DR} provision is then a piece-wise linear function of $\sum_{i=1}^N{y_i}$, i.e.,
\begin{equation}\label{eq:net_DSO_R}
   \pi^\tup{{net,R}}\left( \left.\sum_{i=1}^N{y_i}\right| {r}\right)= (I_{T} - \Lambda) \sum_{i=1}^N \phi^\tup{R}_i\left(\left.y_i, \bs{y}_{-i}\right|{r}\right).
\end{equation}
A qualitative  representation of $\pi^\tup{{net,R}}$ and $\phi^\tup{{R}}$ is depicted in Figure \ref{fig:functions} for different values of $\alpha$ and $\beta$. 
The formulation \eqref{eq:rew_i}  reflects the saturation condition in~\eqref{eq:reward_DSO_R}.
This allows a more responsive energy management for the end users and it prevents infeasibility in the optimization problem.
\begin{figure}
    \centering
    \includegraphics[trim={50 40 50 20},clip,width=\linewidth]{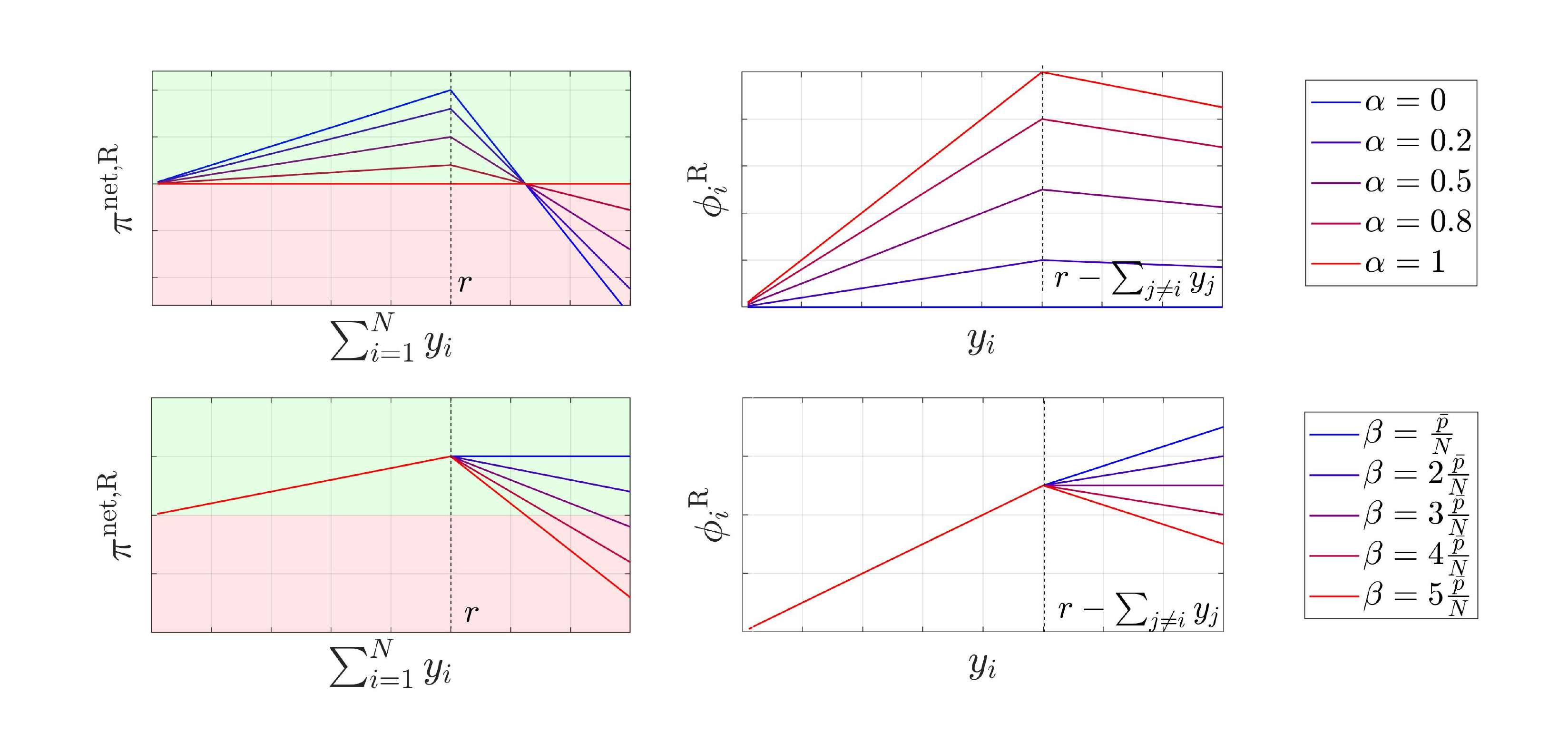}
    \caption{Qualitative behaviour of the reward functions for the \gls{DSO} (left) and the followers (right) for different $\alpha$ (top) and $\beta$ (bottom). Red areas correspond to a cost, while green ones correspond to a reward.}
    \label{fig:functions}
\end{figure}

Similarly, during rebound blocks, the \gls{DSO} receives an economic incentive $\pi^\tup{B}(\cdot| {r}) : \mathbb{R}^T_{\leq 0} \rightarrow \mathbb{R}^T$ proportional to the aggregate rebound flexibility $\sum_{i=1}^N{k_i}$ provided:
\begin{equation}\label{eq:reward_DSO_B}
   \pi^{\tup{B}}\left(\left.\sum_{i=1}^N{k_i}\right| {r}\right)\coloneqq \begin{dcases} 0, \quad \hspace{1.6cm}\: \text{if} \: r \geq 0\\
   -\tilde{p} \textstyle \sum_{i=1}^N{k_i}, \quad \: \text{otherwise},
   \end{dcases}
\end{equation}
where $\tilde{p}$ is the price provided by the \gls{TSO} to the \gls{DSO} for unit of rebound.
From the prosumers point of view, during rebound blocks the prosumers are asked to absorb  extra  power  from the main grid to maintain its stability. In this case, we assume that the prosumers can access the $|k_i|$ unit of power from the main grid at zero cost. Since they will save to purchase this additional power from the grid paying $h$, the \gls{DSO} does not need to share part of the revenues with the prosumers to make them contribute to the rebound. 
Consequently, the reward is entirely kept by the \gls{DSO} for its intermediary service. The current mathematical description of the rebound {does not guarantee  an increment} in the amount of power taken from the grid with respect to a baseline case. Nevertheless, we notice from simulations that prosumers do take advantage of rebound blocks to absorb more energy. This is a behaviour akin to the one encountered in peak-shaving schemes \cite{Price2}.

\subsection{\gls{DSO} constraints and cost function}\label{subsec:DSO}
The \gls{DSO} aims solely at maximizing its revenues \cite{Overall}. These derive both from selling energy and from providing flexibility to the \gls{TSO}. The pricing map $h$ is assumed to be affine with respect to the total energy purchased by the prosumers, as it is shown in \cite{Price2},\cite{cenedese:2019:PEV_charging} that this map induces a peak-shaving behaviour.
\smallskip
\begin{assumption}[Affine pricing map]\label{ass:price}
For an aggregate purchased power $\sum_{i=1}^N p_i$, the pricing map $h:\R^T_{\geq 0}\rightarrow \R^T$ is
\begin{equation}
    h\left(\sum_{i=1}^{N} p_i\right) = C_1\sum_{i=1}^{N}p_i + c_0,
\end{equation}
where $C_1 = \text{diag}(c_1)$ with $c_1, c_0 \in \mathbb{R}^T_{\geq 0}$.\hfill\QEDopen
\end{assumption}
\smallskip
While $c_0$ is a parameter that is optimized by the \gls{DSO}, $c_1$ is assumed to be fixed and used to model the peak price throughout the prediction horizon. This assumption prevents the presence of tri-linear terms in the cost function that would highly complicate the  analysis.


A second stream of revenues for the \gls{DSO} comes from the reward for the provision of \gls{DR} services that is not redistributed to the agents, viz. $\mathds{1}_T^\top \pi^{\text{net,R}}\left(\left.\sum_{i=1}^N{y_i} \right| r\right)$  and $\mathds{1}_T^\top \pi^\tup{B}\left(\left.\sum_{i=1}^N{k_i} \right| r\right)$ . Therefore, the resulting cost function for the \gls{DSO} reads as
\begin{equation}\label{eq:cost_DSO_ori}
\begin{aligned}
    {J}^{\tup{DSO}} \coloneqq  -\left(C_1 \sum_{i=1}^{N}p_i + c_0\right)^\top \sum_{i=1}^{N}p_i - \mathds{1}_T^\top (\pi^{\tup{net,R}} + \pi^\tup{B}),
    \end{aligned}
\end{equation}
where we dropped the arguments of the rewards for the sake of a lighter notation.

Note that in \eqref{eq:cost_DSO_ori} we omit the cost that the \gls{DSO} has to incur to purchase power from the main grid. Alternatively, if one considers the electricity selling price $s$ from the main grid to the \gls{DSO} to be proportional (or affine) to the aggregate power $\sum_{i=1}^N p_i$, then it is possible to incorporate it into the first term of (\ref{eq:cost_DSO_ori}), without changing the problem \mbox{formulation.}

The variables $c_0, \alpha \in \mathbb{R}^T$ are subject to the following box constraints, $\underline{c_0} \leq c_0 \leq \overline{c_0}$ and $ 0 \leq \alpha \leq 1$. Here, the constraints on $c_0$ are used to prevent  the \gls{DSO} from raising the price coefficients unrealistically, since we are modeling a monopoly. The constraints over $\alpha$ ensure that the term is indeed a fraction.

\subsection{Prosumer constraints and cost function}

We assume that each prosumer aims to meet its power demand at the minimum cost. Therefore, the cost function of each prosumer is the sum of four contributions 
\small{
\begin{equation}\label{eq:cost_i}
\begin{aligned}
    J^{i} \coloneqq & \underbrace{\left(C_1 \sum_{i=1}^{N}p_i +  c_0\right)^\top p_i}_{\text{power purchasing cost}} - \underbrace{\vphantom{  \left(C_1 \sum_{i=1}^{N}p_i +  c_0\right)^\top} \mathds{1}_T^\top \bigg((I_T - \Delta) \phi_i  ^{\tup R}\bigg)}_{\text{response reward}}
     \\
     & + \underbrace{\vphantom{\left(C_1 \sum_{i=1}^{N}p_i +  c_0\right)^\top} \delta \left( p_i^\tup{C} + p_i^\tup{DC}\right)}_{\text{storage degradation cost}} + \underbrace{\vphantom{\left(C_1 \sum_{i=1}^{N}p_i +  c_0\right)^\top} \mu y_i}_{\text{discomfort cost}},
    \end{aligned}
\end{equation}}
\normalsize
where $\delta>0$ is a degradation weight and $0 \leq \mu << \bar{p}$ is a (small) discomfort weight. For each time interval $\tau \in  \{1,\ldots, T\}$, the $i$-th prosumer must schedule its electric storage and define the flexibility to provide when requested. The affine local set of constraints describing the dynamic evolution of the state of charge of the storage device reads as 
\begin{subequations}\label{eq:cons_i}
\begin{align}
    & e_{i,\tau} = e_{i,\tau-1} + \Delta \tau\big(\eta^\tup{C} p^\tup{C}_{i,\tau} -  \eta^\tup{DC}p^\tup{DC}_{i,\tau}\big),\\
  & 0 \leq p^\tup{C}_{i,\tau} \leq p^{\max}_i,  \\
  & 0 \leq p^\tup{DC}_{i,\tau} \leq p^{\max}_i, \\
  & 0 \leq e_{i,\tau} \leq e^{\max}_i,\\
 & p_{i,\tau} - k_{i,\tau} \geq d_{i,\tau} - s_{i,\tau} + p^\tup{C}_{i,\tau} - p^\tup{DC}_{i,\tau},
\end{align}
\end{subequations}
where $e_i$ is the state of charge of the electric storage device and $e_{i,0}$ its initial state, while $\eta^C$ and $\eta^{DC}$ are the charging and discharging efficiency of the storage device, respectively.
The prosumer's {fixed} load is denoted by $d_i$, while $s_i$ stands for the produced renewable power. We assume that the local demand of each prosumer is higher or equal
than the local renewable power generated. Thus, we do not model sell-back programs and/or curtailment. As a consequence, $d_i - s_i + p_i^\tup{C} - p_i^\tup{DC} \geq 0$, thus $p_i \geq 0$.

\subsection{Coupling constraint}
Finally, the last set of constraint couples the energy consumption of prosumers:
\begin{subequations} \label{eq:coupling}
\begin{align}\label{eq:13a}
 & \sum_{i=1}^{N}p_i + \sum_{i=1}^{N}y_i  - \sum_{i=1}^{N}k_i\leq \max (g, g-r)\\ 
 \label{eq:13c}
     & \sum_{i=1}^N |k_i| \leq |r|,
    \end{align}
\end{subequations}
where $g \in \mathbb{R}_{\geq 0}^T$ is the resource vector representing the grid capacity. Loosely speaking, in \eqref{eq:13a} the grid capacity $g$ is artificially reduced  during the \gls{DR} response blocks of a quantity $\sum_{i=1}^{N}y_i$. On the other hand, during rebound blocks $g$ is physically enlarged. We stress the difference between response which results in a \textit{virtual} restriction of the resource vector and rebound which implies a \textit{physical} enlargement of it. Finally, \eqref{eq:13c} prevents 
prosumers from taking more units of energy for free than the ones requested by the \gls{TSO} rebound signal. 

\subsection{Discussion of alternative design choices}
Throughout Section I, we have introduced several design choices. We briefly motivate them here, comparing them with possible alternatives. Alternative market designs are:
\begin{itemize}
    \item The \gls{DSO} sells power at a discounted price during rebound blocks instead of for free. This would increase the symmetry between the response and rebound blocks.
    \item Some users are still allowed to absorb extra amount of power during response blocks, as long as on a global scale there is still a reduction (and viceversa for rebound blocks). This would increase the action space of the prosumers, leading to possibly lower costs.
    \item The \gls{DSO} dynamically adjusts the coefficients in the $h$ function to encourage prosumers to offer flexibility without the need of a separate incentive scheme. 
\end{itemize}
While the first two choices provide similar formulations to the proposed one in terms of theoretical properties and market operations, the third choice would lead to a different market structure where the benefits of providing flexibility are global (i.e., via the pricing map $h$) instead of proportional to the individual effort sustained by each prosumer as proposed. This might lead to unfair situations with the presence of free-riders. 

\section{Stackelberg game} \label{Sec:SG}
\subsection{Reformulation as bilevel optimization program}
We are now ready to recast the  \gls{DR} scheme as a Stackelberg game with one leader and $N$ followers as in \eqref{eq:stack_game}. 

The affine reward functions for the response, $\pi^\tup{R}(\sum_{i=1}^N y_i, | r)$ and $\phi^\tup{R}_i( y_i, \boldsymbol{y}_{-i} | r)$, introduce nonlinearities. To proceed with the analysis, we implement an epigraph reformulation and replace them with auxiliary variables.  As an illustrative example, let us consider the term $\min_{y_i}-\phi^\tup{R}_i(y_i, \boldsymbol{y}_{-i} | r)$ in \eqref{eq:cost_i} for a fixed time step $\tau$. It can be shown \mbox{\cite[Sec.~4.1]{Boyd}} that it is equivalent to the following epigraph reformulation 
\begin{equation}\label{eq:epi}
\begin{array}{l}
\underset{y_{i,\tau}, t_{i,\tau}}{\min} \hspace{0.2cm} t_{i,\tau} \\
\text{s.t.} 
\hspace{0.4cm} t_{i,\tau} \leq 0\\
\hspace{0.8cm} t_{i,\tau} \geq -\bar{p} y_{i,\tau}\\
 \hspace{0.8cm} t_{i,\tau} \geq -\left(  (\bar{p} - \beta) y_{i,\tau} + \beta \left({r} - \textstyle \sum_{j \in\mc N\setminus\{i\}}y_{j,\tau}\right) \right),
\end{array}
\end{equation}
where $t_{i,\tau} \in \mathbb{R}$ is an auxiliary variable. From the point of view of the leader, this reformulation simply amounts to replacing \eqref{eq:net_DSO_R} by $-(I_{N} - \Lambda) \textstyle \sum_{i=1}^N t_i$. The rebound case, on the other hand, does not present any nonlinearity in the domain where $k_i$ is defined, see~\eqref{eq:reward_DSO_B}. 
Note that the epigraph reformulation above introduces a series of  additional local and coupling constraint inequalities. 

Next, we define the feasible decision  sets for both the leader and the followers. Starting from the \gls{DSO}, the collective leader strategy $z_0 \coloneqq \text{col}(c_0, \alpha)$ has to satisfy box constraints (cfr. Subsection \ref{subsec:DSO}); thus we can write the \gls{DSO} feasible set  compactly as
 \begin{equation}\label{eq:cons_DSO}
\begin{aligned}
 {\Gamma}\coloneqq \left\{ z_0 \in \mathbb{R}_{\geq 0}^{2T} \: : \: F^\tup{DSO}z_0\leq g^\tup{DSO}\right\}.
\end{aligned}
\end{equation}
Note that $\Gamma$ is nonempty if $\overline{c_0} \geq \underline{c_0}\geq 0$ and it is  compact. 
The local feasible decision of the \mbox{$i$-th} prosumer is denoted by \mbox{${x}_i \coloneqq \text{col}(p_i, y_i, e_i, p_i^{\tup{C}}, p_i^{\tup{DC}}, k_i, t_i) \in\R^{7T}$}. Prosumer $i$ has to satisfy 
\eqref{eq:cons_i}, \eqref{eq:epi} and the non-negativity (respectively, non-positivity) bounds on $y_i$ (respectively, $k_i$) {represented in compact form as} as \mbox{$F_i x_i \leq f_i$}. Therefore the local feasible decision set is \mbox{$\Omega_i\coloneqq \left\{ x_i\in \R^{7T} : F_i x_i \leq f_i \right\}$}, where $\Omega_i$ is a compact set as all variables are bounded (see \eqref{eq:cons_i}, \eqref{eq:coupling} and recall the no sell-back programs assumption). Additionally, the coupling constraints in  \eqref{eq:coupling} can be compactly written as ${A}_i {x}_i + \sum_{j \in \mathcal{N}} {A}_j {x}_j  \leq {b}$. Therefore, the followers feasible decision set becomes
\begin{equation}\label{mathX}
{\mathcal{X}}_i(\bs{x}_{-i})\coloneqq \left\{ x_i\in\Omega_i : {A}_i {x}_i + \sum_{j \in\mc N\setminus\{i\}} {A}_j {x}_j  \leq {b}\right\}.
\end{equation}

We introduce some blanket assumptions on this set of feasible strategy,
standard in the literature \cite{Filippo}.
\smallskip
\begin{assumption}\label{Slater}
For each player $i \in \mc N$, the set $\Omega_i$ is nonempty. The collective feasible set $\bs{\mc X}\coloneqq \prod_{i \in \mathcal{N}}\mathcal{X}_i(\bs{x}_{-i})$ satisfies Slater's constraint qualification.
\end{assumption}
\smallskip
Note that, Slater's constraint qualification can be easily met by choosing a resource vector $g$ large enough.


We reorganize the cost functions \eqref{eq:cost_DSO_ori} and \eqref{eq:cost_i} in a compact matrix form as
\begin{equation}\label{eq:cost_DSO_matrix}
    J^{\tup{DSO}}({z}_0, \boldsymbol{{x}}) = {f_0({z}_0)} + {f_x(\boldsymbol{{x}})} + {\left( \sum_{i \in \mathcal{N}}f_{0,i}({x}_i)\right)^\top {z}_0},
\end{equation}
{\small \begin{equation}
\begin{aligned}\label{eq:cost_i_matrix}
    J^i\left( {x}_{i}, \boldsymbol{{x}}_{-i}|{z}_{0}\right) = & {x}_i^T Q {x}_i +  \left(\sum_{j \in\mc N\setminus\{i\}} Q {x}_j + C_{i,0} {z}_0\right)^\top {x}_i ,\\
\end{aligned}
\end{equation}}
for an appropriate choice of $f_0$, $f_x$, $f_{0,i}$, and $ C_{i,0}$. For all $j,k\in\{1,\cdots,7T\}$, $Q$ is defined component wise as \begin{equation}\label{eq:Q_def}
[Q]_{jk} \coloneqq
\begin{cases}
1 & \text{if } j=k \text{ and } j\leq T\\
0 & \text{oth.}
\end{cases}.
\end{equation} 

At this point, we have recast the Stackelberg game between the \gls{DSO} and the $N$ prosumers as a bilevel optimization problem as anticipated in \eqref{eq:stack_game}. 

\subsection{ Stackelberg equilibrium}
Next, we discuss the equilibrium concept associated to the Stackelberg game in~\eqref{eq:stack_game}. 
For a fixed strategy of the leader, $z_0$, the followers take part in a  \gls{GNEP}. In general, solving a  \gls{GNEP} is a difficult task and does not allow for a nice formulation of the set of all solutions that can be exploited by the leader during the optimization. For this reason, we focus on the subset of \gls{vGNE}.  This set of equilibria represent a ``fair'' competition among followers, since they equally share the cost of satisfying the coupling constraints,  see \cite{VI}. The set of \gls{vGNE}  can be characterized as the solution set of variational inequalities $\mathrm{VI}(\bs{\mc X}, H(z_0, \cdot)) $, where \mbox{$H(z_0, \boldsymbol{x}) \coloneqq \text{col}((\nabla_{x_i} J^i(z_0, \boldsymbol{x}))_{i \in \mathcal{N}})$} is the \textit{pseudo-gradient} mapping of the followers game \cite{VI}. 
 For any given $z_0$, this set of \gls{vGNE} reads as,
\begin{equation}\label{eq:S}
  \mathcal{S}(z_0) \coloneqq \{ \boldsymbol{x} \in \bs{\mc X} \B{:} (\boldsymbol{w} - \boldsymbol{x})^\top H(z_0, \boldsymbol{x})  \geq 0, \forall \boldsymbol{w} \in \bs{\mc X}\}.
\end{equation}
Next, we show that this solution set is nonempty and compact. 

\smallskip

{\begin{lemma}[Existence of \gls{vGNE}]\label{lemma2}
The set of \gls{vGNE} $\mc{S}(z_0)$ in \eqref{eq:S}  is nonempty and compact for all $z_0\in\Gamma$.
\end{lemma}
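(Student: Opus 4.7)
The plan is to invoke the classical existence result for variational inequalities on compact convex sets with continuous operators, as done e.g. in \cite{Filippo}. Concretely, for any fixed $z_0 \in \Gamma$, I would verify three ingredients: $\bs{\mc X}$ is nonempty, convex, and compact; the pseudo-gradient $H(z_0, \cdot)$ is continuous on $\bs{\mc X}$; then the Hartman--Stampacchia theorem delivers nonemptiness, and closedness of the solution set inside a compact set delivers compactness.

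First, I would handle the feasible set. $\bs{\mc X}$ is nonempty by Slater's constraint qualification (Assumption~\ref{Slater}) and convex as the intersection of the polyhedra $\Omega_i$ with the linear coupling $\sum_i A_i x_i \leq b$. For compactness the non-trivial point is that the epigraph reformulation in~\eqref{eq:epi} introduces the auxiliary variables $t_i$; however the inequalities $t_{i,\tau} \leq 0$ and $t_{i,\tau} \geq -\bar p\, y_{i,\tau}$ trap each $t_{i,\tau}$ between zero and an affine function of the already bounded $y_i$, so $\Omega_i$ remains compact. The set $\bs{\mc X}$ is then the intersection of the compact set $\prod_i \Omega_i$ with a closed halfspace, hence compact.

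Second, from~\eqref{eq:cost_i_matrix} the cost $J^i$ is quadratic in $x_i$ with Hessian $2Q$ (where $Q$ in \eqref{eq:Q_def} is diagonal with $0/1$ entries, hence positive semidefinite), so each $J^i(\cdot,\bs x_{-i}|z_0)$ is convex and continuously differentiable. Its partial gradient
\begin{equation*}
\nabla_{x_i} J^i(x_i,\bs x_{-i}|z_0) = 2 Q x_i + \sum_{j\in\mc N\setminus\{i\}} Q x_j + C_{i,0} z_0
\end{equation*}
is affine in $\bs x$, and therefore $H(z_0,\cdot)$ is linear (and in particular continuous) on $\bs{\mc X}$.

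Third, I would conclude: by the classical existence theorem for variational inequalities on compact convex sets with continuous operators, $\mathrm{VI}(\bs{\mc X}, H(z_0,\cdot))$ admits at least one solution, hence $\mc S(z_0)\neq\emptyset$. Since $H(z_0,\cdot)$ is continuous, for each fixed $\bs w \in \bs{\mc X}$ the set $\{\bs x \in \bs{\mc X} : (\bs w-\bs x)^\top H(z_0,\bs x) \geq 0\}$ is closed; $\mc S(z_0)$ is the intersection of such sets over $\bs w \in \bs{\mc X}$, hence closed, and being a closed subset of the compact set $\bs{\mc X}$ it is compact. The only mildly delicate point is the compactness bookkeeping for the lifted variables $t_i$; everything else is a direct application of standard VI machinery.
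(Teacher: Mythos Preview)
Your proposal is correct and follows essentially the same route as the paper: the paper's proof simply cites a standard VI existence result (continuity of $H(z_0,\cdot)$ plus compactness of $\bs{\mc X}$) for nonemptiness and a result from Facchinei--Pang for compactness of $\mc S(z_0)$. You spell out the verification of the hypotheses (including the boundedness of the lifted variables $t_i$) and give a direct closedness argument for $\mc S(z_0)$ rather than citing a theorem, but the underlying strategy is identical.
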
}

\smallskip
{\begin{proof}
{From \cite[Th.~6]{Exist} it follows that t}he continuity of $H(z_0,\cdot)$ and the compactness of  $\boldsymbol{\mathcal{X}}$ entails the existence of a solution to VI($H(z_0,\cdot),\boldsymbol{\mathcal{X}}$){, i.e., $\mc{S}(z_0)\neq \emptyset$}. Additionally \cite[Th.~2.2.5]{facchinei:pang:2003:finite_dim_VI} ensures the compactness of $\mc{S}(z_0)$. 
\end{proof} }
\smallskip

We can rewrite the {complete \gls{DR}} problem as
\begin{equation} \label{eq:Stack_S}
\left\{\begin{aligned}
\min _{z_{0}, \boldsymbol{x}} & \quad J^\tup{DSO}\left(z_{0}, \boldsymbol{x}\right) \\
\text { s.t. } & \quad \left(z_{0}, \boldsymbol{x}\right) \in {\gph(\mathcal{S}) \cap  (\Gamma\times \R^n)},
\end{aligned}\right.
\end{equation}
that is a \gls{MPEC}. {This class of} problems is inherently non-convex and typically hard to solve. In general, there is no guarantee that feasible solutions strictly lie in the interior of the feasible set, which may even be disconnected. As a consequence, constraint qualifications may be violated at every feasible point. For this reason we focus on seeking local solutions of \eqref{eq:Stack_S}. 

We consider the following definition of {local generalized Stackelberg equilibrium} from \cite{Filippo}.

\smallskip
\begin{definition}[Local generalized Stackelberg equilibrium]
A pair $(z_0^\star, \boldsymbol{x}^\star) \in \gph(\mathcal{S}) \cap(\Gamma\times\R^n)$, with $\mc S$ as
in \eqref{eq:S}, is a \gls{lSE} of the 
game in \eqref{eq:Stack_S} (and thus of \eqref{eq:stack_game}) if there exist open neighborhoods $\mathcal{O}_{z_0^\star}$ and $\mathcal{O}_{\boldsymbol{x}^\star}$ of $z_0^\star$ and $\boldsymbol{x}^\star$ respectively, such that
\begin{equation}\label{Def}
    J^\tup{DSO}\left(z_{0}^{\star}, \boldsymbol{x}^{\star}\right) \leq \inf _{(z_0,\bs x)\in \gph(\mathcal{S}) \cap (\Gamma\times\R^n) \cap \mathcal{O}} J^\tup{DSO}\left(z_{0}, \boldsymbol{x}\right)
\end{equation}
where $\mathcal{O}\coloneqq \left( \mathcal{O}_{z_{0}^{\star}} \times \mathcal{O}_{\boldsymbol{x}^{\star}} \right)$.
\end{definition}
\smallskip
Roughly speaking, at an \gls{lSE}, the \gls{DSO} and the prosumers locally fulfill the set of mutually coupling constraints and none of them can achieve a lower cost by unilaterally deviating from their current strategy.
\smallskip
\begin{theorem}[Existence of \gls{lSE}] 
\label{theorem} 
Under Assumption \ref{Slater}, there exists {at least one} \gls{lSE} of the \gls{MPEC} in \eqref{eq:Stack_S}. 
\end{theorem}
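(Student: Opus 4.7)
The plan is to apply the Weierstrass extreme value theorem to the \gls{MPEC} in \eqref{eq:Stack_S}. If the feasible set $\gph(\mathcal S)\cap(\Gamma\times\R^n)$ is nonempty and compact and the leader's cost $J^\tup{DSO}$ is continuous on it, then a global minimizer exists, and any global minimizer trivially satisfies \eqref{Def} by taking the neighbourhoods $\mathcal O_{z_0^\star}$ and $\mathcal O_{\bs x^\star}$ to be the entire ambient spaces; hence it is an \gls{lSE}.

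Nonemptiness is immediate: Lemma~\ref{lemma2} gives $\mathcal S(z_0)\neq\emptyset$ for every $z_0\in\Gamma$, while $\Gamma\neq\emptyset$ under the assumption $\overline{c_0}\geq\underline{c_0}\geq 0$ noted after \eqref{eq:cons_DSO}. Continuity is equally easy: inspection of \eqref{eq:cost_DSO_matrix} shows that $J^\tup{DSO}$ is polynomial in $(z_0,\bs x)$.

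The main effort, and the step I expect to be the key obstacle, is establishing compactness of the feasible set. The leader set $\Gamma$ is a compact box, and the collective follower set $\bs{\mc X}$ is a compact polytope since each $\Omega_i$ is compact and the coupling constraints are finitely many affine inequalities. Thus $\gph(\mathcal S)\subseteq\Gamma\times\bs{\mc X}$ is bounded, and it suffices to show that $\gph(\mathcal S)$ is closed. The natural route is a standard upper-semicontinuity argument for the solution map of a parametric \gls{VI} with continuous operator on a fixed feasible set: for a sequence $(z_0^k,\bs x^k)\in\gph(\mathcal S)$ with $(z_0^k,\bs x^k)\to(z_0^\star,\bs x^\star)$, each iterate satisfies $(\bs w-\bs x^k)^\top H(z_0^k,\bs x^k)\geq 0$ for every $\bs w\in\bs{\mc X}$; since $H$ is affine, and hence continuous, in $(z_0,\bs x)$ by \eqref{eq:cost_i_matrix} together with the sparsity of $Q$ in \eqref{eq:Q_def}, and since $\bs{\mc X}$ is closed, passing to the limit yields $\bs x^\star\in\bs{\mc X}$ and $(\bs w-\bs x^\star)^\top H(z_0^\star,\bs x^\star)\geq 0$ for all $\bs w\in\bs{\mc X}$, i.e.\ $\bs x^\star\in\mathcal S(z_0^\star)$. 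Intersecting with the closed set $\Gamma\times\R^n$ preserves closedness, so the feasible set is compact and Weierstrass delivers the \gls{lSE}.
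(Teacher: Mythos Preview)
Your proof is correct and follows essentially the same strategy as the paper: both establish that the feasible set $\gph(\mathcal S)\cap(\Gamma\times\R^n)$ is nonempty and compact and then invoke a Weierstrass-type argument (the paper through \cite[Th.~1.3.4]{Theo}, you directly) to obtain a global minimizer, which is a fortiori an \gls{lSE}. Your treatment of compactness is in fact more explicit than the paper's, since you spell out the closedness of $\gph(\mathcal S)$ via a sequential argument using the continuity of $H$ and the closedness of $\bs{\mc X}$, whereas the paper simply asserts compactness as an intersection of compact sets.
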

\begin{proof}  
 From Lemma~\ref{lemma2}, for all $\bar z_0\in\Gamma$ there exists an $\bar {\bs x}$ such that $(\bar z_0,\bar{\bs x})\in\gph(\mc S)$, therefore $\gph(\mathcal{S}) \cap (\Gamma\times\R^n)\neq \emptyset$ and it is compact since it is an intersection of compact sets.
 
From \mbox{\cite[Th.~1.3.4]{Theo}} and the continuity of the cost function $J^{\tup{DSO}}$ with respect to $z_0$, we conclude the existence of a solution to the \gls{MPEC} in \eqref{eq:Stack_S}, that in turns satisfies the definition in \eqref{Def}. %

\end{proof}
\smallskip

\subsection{Solution method} \label{Sec:SM}
To solve the VI in \eqref{eq:Stack_S}, we exploit the strong relation between \glspl{vGNE} and KKT of the followers' game. The KKT conditions of the follower level read in a compact form as

\begin{equation}\label{eq:KKT}
    \left\{\begin{array}{l}\bs Q\boldsymbol{x} + Cz_0 +A^{\top} \lambda+F^{\top} \bs\lambda=0  \\
    0 \leq \lambda \perp-(A \boldsymbol{x}-b) \geq 0 \\ 
    {0 \leq \bs \lambda \perp-\left(F \bs x-f\right) \geq 0,}\end{array}\right.
\end{equation}
where $\lambda$ is the dual variable associated with the coupling constraints $A\boldsymbol{x} \leq b$. The (local) dual variable $\lambda_i$ is associated with the (local) constraints of each follower \mbox{$i\in\mc N$} and  $\boldsymbol{\lambda}\coloneqq \text{col}((\lambda_i)_{i\in\mathcal{N}})$, while $F\coloneqq \text{diag}((F_i)_{i\in\mc N})$, and \mbox{$f \coloneqq \text{col}((f_i)_{i\in\mc N})$}.  
 The matrices in the costs are
 \mbox{$\bs Q \coloneqq (I_N+\mathbb{1}_{N\times N})\otimes Q$} and  \mbox{$ C\coloneqq\left[\begin{array}{ccc}C_{1,0}^\top & \ldots & C_{N, 0}^\top \end{array}\right]^\top$.} 
{The following lemma ensures that if a collective strategy satisfies the {KKT} conditions, then it is also a \gls{vGNE}.}
\smallskip
\begin{lemma}[KKT reformulation \mbox{\cite[Th.~1.3.5]{Theo}}] \label{lemma3}
 Given any $z_0 \in \Gamma$, the strategy $\bs{x^\star}\in\mc S(z_0)$ if and only if there exists a triplet $(\bs{x^\star},\bs \lambda^\star,\lambda^\star)$ satisfying the {KKT} conditions in \eqref{eq:KKT}. 
\end{lemma}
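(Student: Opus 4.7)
The plan is to verify that the hypotheses of the cited result \cite[Th.~1.3.5]{Theo} are satisfied by the followers' game, so that the KKT system \eqref{eq:KKT} characterizes exactly the solutions of $\mathrm{VI}(\bs{\mc X}, H(z_0,\cdot))$, i.e., the set $\mc S(z_0)$. The cited theorem requires (i) each follower's optimization problem (for fixed opponents' strategies) to be convex, and (ii) a constraint qualification to hold so that KKT conditions are both necessary and sufficient; under these conditions, a v-GNE is equivalent to a collective strategy at which every follower's KKT system is satisfied \emph{with a common dual multiplier $\lambda$ associated to the shared coupling constraint} $A\bs x\le b$.

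First I would verify convexity of the follower's problem. Looking at \eqref{eq:cost_i_matrix}, the Hessian of $J^i$ with respect to $x_i$ is $2Q$, where $Q$ by \eqref{eq:Q_def} is diagonal with entries in $\{0,1\}$, hence positive semidefinite. Thus $J^i(\cdot,\bs x_{-i}|z_0)$ is convex. The local set $\Omega_i$ and the coupling constraints in \eqref{mathX} are polyhedral, so $\mc X_i(\bs x_{-i})$ is convex for every $\bs x_{-i}$. Combined with Assumption~\ref{Slater} (Slater's CQ), this allows me to apply the KKT theorem to each follower's problem: $x_i^\star$ solves the $i$-th follower's best response if and only if there exist multipliers $(\lambda_i^\star,\lambda^{(i)\star})$ associated with the local constraint $F_i x_i\le f_i$ and with the shared coupling constraint, respectively, such that the corresponding stationarity and complementarity conditions hold.

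Next I would use the characterization of variational equilibria via VIs from \eqref{eq:S}. By \cite[Th.~3.1]{facchinei:pang:2003:finite_dim_VI}-style arguments (essentially the content of \cite[Th.~1.3.5]{Theo}), $\bs x^\star\in\mc S(z_0)$ if and only if $\bs x^\star$ solves the concatenated KKT system with a \emph{single} shared multiplier $\lambda^\star$ (rather than player-specific multipliers $\lambda^{(i)\star}$) attached to the coupling constraint $A\bs x\le b$. This sharing is precisely what distinguishes a v-GNE from an arbitrary GNE, and it is the reason why the KKT reformulation \eqref{eq:KKT} features the scalar block $\lambda$ together with the block-diagonal local multipliers $\bs\lambda=\text{col}((\lambda_i)_{i\in\mc N})$ stacked alongside the collective stationarity equation $\bs Q\bs x+Cz_0+A^\top\lambda+F^\top\bs\lambda=0$.

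The main obstacle — and the only place any genuine verification is required — is matching the notation to that of \cite[Th.~1.3.5]{Theo} and showing that the compact stationarity block in \eqref{eq:KKT} is exactly the concatenation $\text{col}((\nabla_{x_i}J^i(z_0,\bs x))_{i\in\mc N})+A^\top\lambda+F^\top\bs\lambda=0$. Using \eqref{eq:cost_i_matrix}, $\nabla_{x_i}J^i=2Qx_i+\sum_{j\neq i}Qx_j+C_{i,0}z_0$, and stacking these gradients over $i\in\mc N$ produces the matrix $\bs Q=(I_N+\mathds{1}_{N\times N})\otimes Q$ applied to $\bs x$, plus $Cz_0$, which recovers the first line of \eqref{eq:KKT}. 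Both implications (v-GNE $\Rightarrow$ KKT and KKT $\Rightarrow$ v-GNE) then follow from \cite[Th.~1.3.5]{Theo} by convexity of each $J^i$ in $x_i$ and Slater's CQ, which concludes the proof.
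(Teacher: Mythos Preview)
Your proposal is correct and matches the paper's approach: the paper states Lemma~\ref{lemma3} as a direct citation of \cite[Th.~1.3.5]{Theo} without further proof, and your argument simply verifies that the hypotheses of that theorem (convexity of each $J^i(\cdot,\bs x_{-i}|z_0)$ via $Q\succeq 0$, polyhedral constraints, and Slater's CQ from Assumption~\ref{Slater}) hold and that the stacked pseudo-gradient indeed yields the stationarity block $\bs Q\bs x+Cz_0$.
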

\smallskip

Substituting \eqref{eq:KKT} back into \eqref{eq:Stack_S}, we obtain a single-level optimization problem 
\begin{equation}\label{eq:reference}
\begin{cases}\min _{y_{0}, \boldsymbol{x}, \lambda, \boldsymbol{\lambda}} & J^\tup{DSO}\left(z_{0}, \boldsymbol{x}\right) \\
\text { s.t. } & \bs Q \boldsymbol{x}+C z_{0} + A^{\top} \lambda+F^{\top} \boldsymbol{\lambda}=0 \\
& 0 \leq {\bs\lambda} \perp-\left({ F \bs x -f}\right) \geq 0, \\
& 0 \leq \lambda \perp-(A \boldsymbol{x}-b) \geq 0\\
& z_{0} \in \Gamma,\end{cases}
\end{equation}
that can be solved with one of the methods for MPEC proposed in \cite{luo_pang_ralph_1996}.

\section{Simulations} \label{Sec:Sim}
We apply our hierarchical \gls{DR} scheme to a community of residential buildings equipped with an array of photovoltaic panels and a battery for electric storage.  We consider a small-scale community of $N=5$ residential buildings during the heating season. Building loads are simulated in EnergyPlus via the Cesar-p interface \cite{cesar} using building characteristic input files corresponding to different prototypical Swiss residential buildings. Weather data are from MeteoSwiss, while electricity prices are from EPEX spot market. We simulate the demand response scheme throughout one day with hourly resolution, i.e., $\tau \in \{1, \ldots, 24\}$, with a random flexibility request ${r}$ signal encompassing both response and rebound blocks.  For small-scale instances, \eqref{eq:reference} can be solved in a centralized manner relying on a big-M reformulation with empirically tuned $M$-constants \cite{bigM}, returning a MIQP. In Figure \ref{fig:agents}, we report the behaviour of prosumer $1$. Results for the remaining agents are similar and omitted in the interest of space. As expected, the agents tend to charge their battery during rebound periods (see, for instance, interval $[1-3]$ h in Figure \ref{fig:agents}) and discharge it during response periods (see, for instance, interval $[3-6]$ in Figure \ref{fig:agents}). As reported in Figure \ref{fig:flexy} (top plot), the community of buildings is able to contribute to the flexibility provision task, with each building contributing proportionally to its battery size. The inability to provide the full upward flexibility request during some time steps, for instance in the interval $[8-9]$ h is due to a limited resource vector $g$ that forces the buildings to first satisfy their internal demand. On the contrary, the inability to meet the downward flexibility request (see, for instance, interval $[1-2]$ h) is a consequence of physical constraints on the size/ramping of the battery. Additionally, Figure \ref{fig:flexy} (bottom plot) suggests that the affine pricing map $h$ implicitly promotes peak shaving \cite{Price2}. During rebound blocks, the overall power taken from the grid, i.e., $\sum_{i=1}^N p_i - k_i$, is higher than the baseline case, akin to a ``valley-filling" behaviour. Finally, we report the distribution share and the pricing map $h$ decided by the \gls{DSO} in Figure \ref{fig:price}. Interestingly, a pattern seems to emerge between $\alpha$ and $h$ with electricity prices lowering considerably during response blocks. This anti-phase behaviour can be explained by considering that a competition between the \gls{DSO}'s objectives might arise: while the flexibility provision objective tends to reduce the amount of power to be purchased, the first term in \eqref{eq:cost_DSO_ori} would indeed benefit from an increase in it, thus the \gls{DSO} tends to offer ``reduced" prices to encourage more buying.

\begin{figure}
    \centering
    \includegraphics[trim=0cm 0cm 0cm 0cm, clip, width=\linewidth]{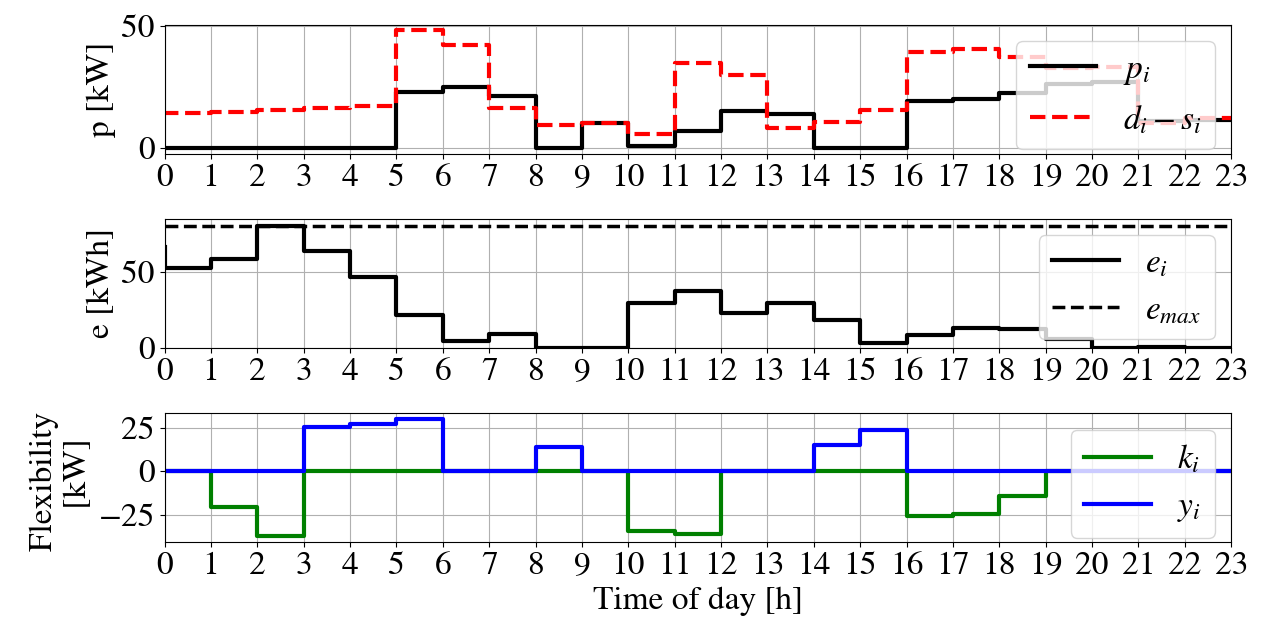}
    \caption{Behaviour of prosumer $i=1$ during the \gls{DR} day-ahead scheduling.}
    \label{fig:agents}
\end{figure}

\begin{figure}
    \centering
    \includegraphics[trim=0cm 2cm 0cm 0cm, clip, width=\linewidth]{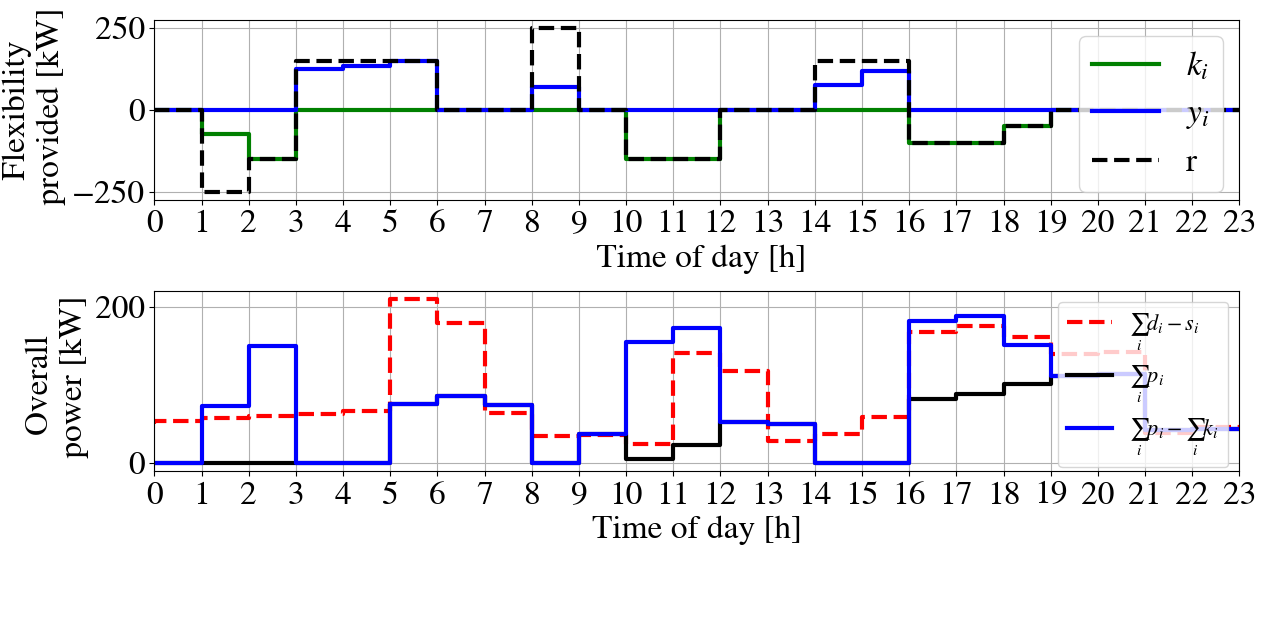}
    \caption{Top plot represents the aggregated power purchased from the grid. Bottom plot represents the aggregated flexibility provision offered by the buildings community.}
    \label{fig:flexy}
\end{figure}

\begin{figure}
    \centering
    \includegraphics[trim=0cm 2cm 0cm 0cm, clip,width=\linewidth]{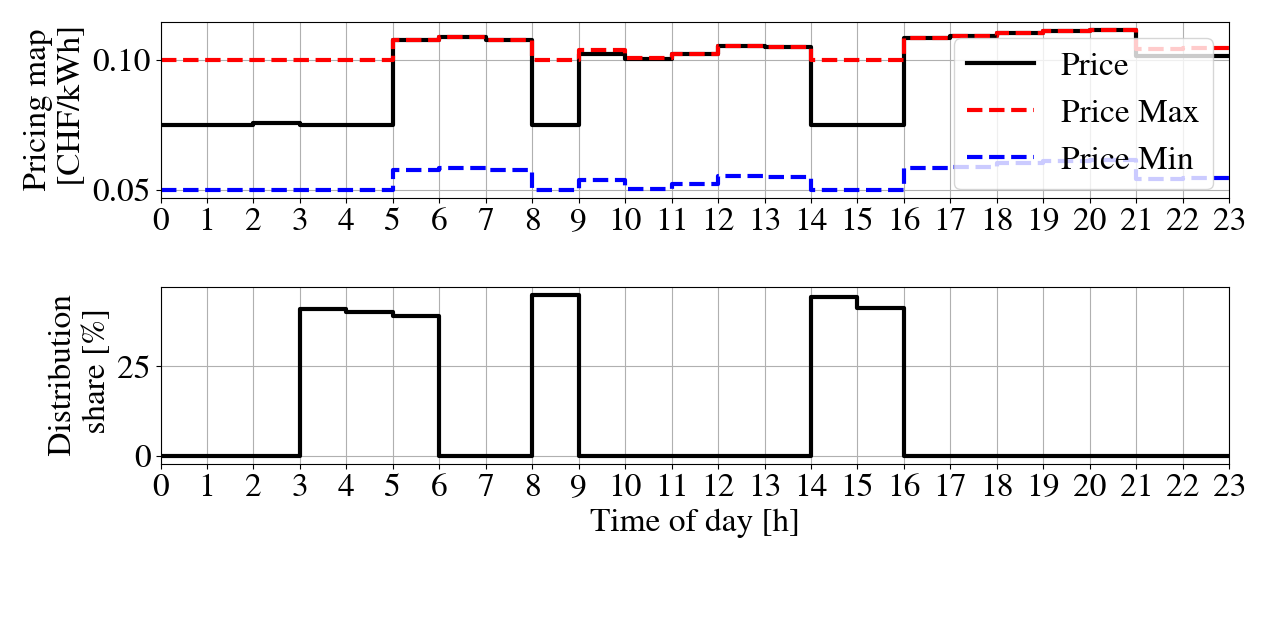}
    \caption{Top plot represents the pricing map $h$. Bottom plot represents the distribution share $\alpha$.}
    \label{fig:price}
\end{figure}

\section{Conclusion} \label{Sec:end}
We present a novel demand response scheme modelled as a Stackelberg game between a \gls{DSO} and a community of prosumers. The scheme is able to couple the local electricity market with ancillary service provision ensuring at the same time fairness. Additionally, it allows a broader action space to both the \gls{DSO} (thanks to the two different revenue streams in place) and the prosumers ({thanks to the usage of saturated reward function}). We prove the existence and convergence to local Stackelberg equilibria for the underlying game. 


\bibliographystyle{IEEEtran}
 \bibliography{biblio_CC.bib}

\end{document}